\newtheorem{Theorem}{Theorem}
\newtheorem{Corollary}{Corollary}
\renewcommand{\emph}[1]{{\it#1}}
\newcommand{\proot}{a}
\newcommand{\finalversion}[1]{#1}
\newcommand{\balign}{\begin{align}}
\newcommand{\ealign}{\end{align}}
\newcommand{\0}{{\bm 0}}
\newcommand{\Fc}{{\cal F}}
\newcommand{\Order}{{\cal O}}
\newcommand{\Simplex}[1]{{\Delta^{#1}}}
\newcommand{\W}{\bm{W}}
\newcommand{\bdes}{\begin{description}}
\newcommand{\bd}{\begin{description}}
\newcommand{\bean}{\begin{eqnarray*}}
\newcommand{\benu}{\begin{enumerate}}
\newcommand{\ben}{\begin{enumerate}}
\newcommand{\bite}{\begin{itemize}}
\newcommand{\bi}{\begin{itemize}}
\newcommand{\bmu}{\begin{multline}}
\newcommand{\bm}[1]{{\bf #1}}
\newcommand{\boundary}{\partial}
\newcommand{\choo}[2]{{{#1}\choose{#2}}} %DEFINED IN TEX!
\newcommand{\dspfrac}[2]{\frac{\displaystyle #1}{\displaystyle #2} }
\newcommand{\eban}{\begin{eqnarray*}}
\newcommand{\eba}{\begin{eqnarray}}
\newcommand{\eb}{\begin{equation}}
\newcommand{\edes}{\end{description}}
\newcommand{\ed}{\end{description}}
\newcommand{\eean}{\end{eqnarray*}}
\newcommand{\eea}{\end{eqnarray}}
\newcommand{\eenu}{\end{enumerate}}
\newcommand{\een}{\end{enumerate}}
\newcommand{\ee}{\end{equation}}
\newcommand{\eite}{\end{itemize}}
\newcommand{\ei}{\end{itemize}}
\newcommand{\emu}{\end{multline}}
\newcommand{\eqdef}{:=}
\newcommand{\ev}{\bm{e}}
\newcommand{\fb}{\ov{f}}
\newcommand{\figref}[1]{Fig. \ref{#1}}
\newcommand{\fv}{\bm{f}}
\newcommand{\g}{\bm{g}}
\newcommand{\gv}{\bm{g}}
\newcommand{\hide}[1]{}	%hides text
\newcommand{\intersect}{\cap}
\newcommand{\matrx}[1]{{\left[ \stackrel{}{#1}\right]}}
\newcommand{\notimplies}{\ \ \not \!\!\!\! \implies}
\newcommand{\ov}{\overline}
\newcommand{\sectref}[1]{section \ref{#1}}
\newcommand{\suchthat}{\colon}
\newcommand{\tp}{\top}	%rm better than \sf, \tt, \rm T, or \textsf{\textsc{t}}
\newcommand{\wb}{\ov{w}}
\newcommand{\xb}{\ov{x}}
\newcommand{\xh}{\hat{x}}
\newcommand{\xvb}{\ov{\x}}
\newcommand{\xvh}{\hat{\x}}
\newcommand{\xv}{\bm{x}}
\newcommand{\x}{\bm{x}}
\newcommand{\yb}{\ov{y}}
\newfont{\gilfont}{cmsy10 scaled\magstep0}
\newcommand{\Reals}{\mathbb{R}} % reals
\newcommand{\Complex}{\mathbb{C}} % complex numbers
\begin{document}

\title{Proof of the Feldman-Karlin Conjecture on the \\Maximum Number of Equilibria in an \\ Evolutionary System}
\author{Lee Altenberg
\\{University of Hawai`i at Manoa}
\\ \url{altenber@hawaii.edu}
}

\begin{abstract}
Feldman and Karlin conjectured that the number of isolated fixed points for deterministic models of viability selection and recombination among $n$ possible haplotypes has an upper bound of $2^n - 1$.  Here a proof is provided.  The upper bound of $3^{n-1}$ obtained by Lyubich et al. (2001) using B\'{e}zout's Theorem (1779) is reduced here to $2^n$ through a change of representation that reduces the third-order polynomials to second order.  A further reduction to $2^n - 1$ is obtained using the homogeneous representation of the system, which yields always one solution `at infinity'.   While the original conjecture was made for systems of viability selection and recombination, the results here generalize to \finalversion{viability} selection with any arbitrary system of bi-parental transmission, which includes recombination and mutation as special cases. An example is constructed of a mutation-selection system that has $2^n - 1$ fixed points given any $n$, which shows that $2^n - 1$ is the sharpest possible upper bound that can be found for \finalversion{the} general space of selection and transmission coefficients. 
\ \\ \ \\
Keywords:  Feldman Karlin conjecture; selection; recombination; transmission; fixed points; equilibria; B\'{e}zout's Theorem; homotopy method.
\ \\ \ \\
To appear in \emph{Theoretical Population Biology}, \url{doi:10.1016/j.tpb.2010.02.007}
\end{abstract}
\maketitle

\section{Introduction}

In a tribute issue to the late Sam Karlin, \citet{Feldman:2009} recounts their early collaborations \citep{Feldman:and:Karlin:1968,Karlin:and:Feldman:1969,Karlin:and:Feldman:1970:Convergence,Karlin:and:Feldman:1970:Linkage}, and mentions a longstanding unsolved conjecture they proposed regarding the maximum number of isolated fixed points of the population genotype frequencies, under viability selection and recombination:
\begin{quote}
For the two-locus two-allele problem these considerations suggested a maximum of fifteen fixed points, and in our work with the symmetric viability model we demonstrated that fifteen was indeed realizable when recombination was present.  Amazingly, to this day, our conjecture that the maximum number of equilibria in any $n$-chromosome viability system and for any recombination arrangement is $2^n -1$ has not been proven, although there are no counterexamples. 
\end{quote}

Here I provide a proof, through a modification of the approach used by \citet{Lyubich:1992}.  The proof also generalizes the result to other genetic processes besides recombination, in fact, to any arbitrary biparental transmission system, as described below.  These results apply to systems of autosomal loci with discrete, \sloppy non-overlapping generations, random mating, and constant viability selection.

\section{The Model}
The dynamical system considered here is represented by the recursion:
\eb
\label{eq:Recursion}
x_i' = g_i(\x) \eqdef  \dspfrac{1}{\sum_{j,k=1}^n w_{jk} x_j x_k} \  \sum_{j,k=1}^n T_{ijk} w_{jk} x_j x_k 
\ee
where 
\bd
\item [$i \in \{1, \ldots, n\}$] indexes the $n$ possible gamete genotypes (i.e. \emph{haplotypes});
\item [$x_1, \ldots, x_i, \ldots x_n$] are the state variables, the frequencies of haplotypes in the population, so $x_i \geq 0$ and \sloppy $\sum_{i=1}^n x_i = 1$;
\item [$\x = \begin{pmatrix}x_1\\ \vdots\\ x_n \end{pmatrix}$], $\g(\x) = \begin{pmatrix}g_1(\x)\\ \vdots\\ g_n(\x) \end{pmatrix}$;
\item [$x_i'$] is the frequency of haplotype $i$ in the next generation;
\item [$w_{jk} = w_{kj} \geq 0$] is the fitness of the diploid genotype composed of haplotypes $j$ and $k$;
\item [$T_{ijk} = T_{ikj} \geq 0$] is the probability that diploid genotype $jk$ produces gamete genotype $i$, so 
\[
\sum_{i=1}^n T_{ijk} = 1 \ \mbox{for each } j, k.
\]
\ed
\emph{Perfect transmission} is said to occur if gamete genotypes are identical to one or the other of the parental haplotypes in equal proportions:
\[
T_{ijk} = \frac{1}{2} (\delta_{ij} + \delta_{ik}).
\]
where $\delta_{ij} = 1$ if $i=j$, $\delta_{ij}=0$ if $i\neq j$. 

This recursion defines the map
\[
\gv: \Delta^{n-1} \mapsto \Delta^{n-1}
\]
on the $n-1$ dimensional simplex, 
\[
\Delta^{n-1} = \{ \x \geq \0 \suchthat \sum_{i=1}^n x_i = 1 \} \subset \Reals^n.
\]
 
The entries of the $n$ by $n^2$ matrix $\matrx{T_{ijk}}$ of transmission probabilities are determined by the biological processes that occur during genetic transmission.   Recombination, mutation, gene conversion, segregation distortion, inversions, and all of their combinations, are all representable by sets of $T_{ijk}$.     An absence of any transforming processes, and no segregation distortion, produces perfect transmission.  In this absence of transformation processes, a multiple locus system is equivalent to a single locus system where each multilocus haplotype is formally a single allele.

Processes that are not covered by this framework include gene duplication and transposition, since the number of potential haplotypes becomes infinite.  Also, infinite alleles models \citep[Sec. 3.6]{Ewens:2004:MPG} are obviously not covered.  Deletions, however, within a fixed set of genes, are representable by $T_{ijk}$ with finite $n$.  

The set of all possible transmission matrices clearly includes many that do not correspond to any known biological processes, but this will be seen to be irrelevant, as the results apply to all transmission matrices.

\section{Prior Work}
Several previous studies prove results close to Feldman and Karlin's conjecture.  \citet{Moran:1963:Measurement}, at the tail end of his Conclusions, provides early insight into the question of how many isolated fixed points are possible in a system of selection and recombination, when he invokes  B\'{e}zout's Theorem \citeyearpar{Bezout:1779}: 
\begin{quote}
The above discussion also raises the interesting theoretical question of how many stationary points the adaptive topography for two unlinked loci can have. It is clear that in trivial cases such stationary points can fill up a linear or a real continuum. This occurs when the $w_{ij}$ are independent of $i$, or $j$, or both. We may, however, ask how many isolated stationary points are possible. The two equations typified by (15) are cubics in $P$ and $p$ and hence, by B\'{e}zout's theorem have at most 9 distinct isolated solutions, real or complex. 
\end{quote}

\citet{Tallis:1966}  shows that with selection and perfect transmission, there are a maximum of $2^n - 1$ isolated fixed points of $\gv$, each fixed point being a population containing haplotypes from one of the $2^n-1$ possible subsets of $n$ haplotypes, the empty set excluded. (Also see \citealt[Theorem 9.1.1]{Lyubich:1992}.) 

\citet{Lyubich:1992} shows that with arbitrary transmission but with no selection, $2^{n-1}$ is an upper bound on the number of isolated fixed points of $\gv$.  \citet{Lyubich:Kirzhner:and:Ryndin:2001} show that with arbitrary transmission \emph{and} selection, $3^{n-1} $ is an upper bound on the number of isolated fixed points of $\gv$.  

The approach taken in \citet{Lyubich:1992} and \citet{Lyubich:Kirzhner:and:Ryndin:2001} to obtain the upper bounds relies on B\'{e}zout's Theorem \citeyearpar{Bezout:1779}. 

As originally stated (translated from French):
%%%%%%%
\begin{Theorem}{\citet[47.  p. 24]{Bezout:1779E}}
The degree of the final equation resulting from an arbitrary number of complete equations containing the same number of unknowns and with arbitrary degrees is equal to the product of the exponents of the degrees of these equations.
\end{Theorem}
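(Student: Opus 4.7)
The plan is to prove the modern formulation of B\'{e}zout's Theorem: in complex projective space $\Proj^n$, any $n$ homogeneous polynomials $F_1, \ldots, F_n$ of degrees $d_1, \ldots, d_n$ whose common zero set is zero-dimensional have exactly $\prod_{i=1}^n d_i$ common zeros counted with multiplicity. The passage from this to B\'{e}zout's 1779 wording, in which the ``final equation'' is the univariate polynomial left after successive elimination of $n-1$ of the unknowns, then follows by generically projecting the finite zero set onto a coordinate line.

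My first step would be a deformation argument. For each $i$ I would select a ``model'' polynomial $G_i = \prod_{j=1}^{d_i} L_{ij}$, where the $L_{ij}$ are linear forms in general position. Each common zero of $G_1, \ldots, G_n$ corresponds to choosing one hyperplane $\{L_{i,j_i} = 0\}$ from each $G_i$; the $n$ chosen hyperplanes intersect generically in a single point, so the total count is exactly $\prod_{i=1}^n d_i$ distinct points. I would then form the linear homotopy $H_i(t) = (1-t) G_i + t F_i$ and argue that the incidence set $\{(\x,t) \in \Proj^n \times [0,1] : H_i(\x,t) = 0 \text{ for all } i\}$ is a proper family over $[0,1]$, so the cardinality of its fibres, counted with multiplicity, is locally constant in $t$ and hence agrees at $t=0$ and $t=1$.

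A second, more classical route, closer in spirit to B\'{e}zout's original argument, is the method of resultants. One eliminates $x_n$ pairwise to form the equations $\mathrm{Res}_{x_n}(F_1, F_i)$ in $x_1, \ldots, x_{n-1}$ for $i=2,\ldots,n$; their $x_{n-1}$-resultants eliminate another variable, and so on. The degrees multiply at each elimination step, so that after $n-1$ eliminations only a univariate polynomial in $x_1$ remains whose degree is $\prod_i d_i$, provided no extraneous factors appear along the way.

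The main obstacle in either route is controlling behaviour ``at infinity'' and at degenerate fibres. In the homotopy approach I must show properness of the incidence variety, i.e., that no solutions escape to infinity during the deformation; this uses essentially that we have compactified to $\Proj^n$ and that the $F_i$ and $G_i$ share the same top-degree behaviour, together with a genericity hypothesis on the $G_i$. In the resultant approach I must verify that at each step the resultant does not pick up spurious factors from a shared component at infinity; this is exactly the hypothesis that the leading homogeneous parts of the $F_i$ have no common projective zero, which is what B\'{e}zout means by requiring the system to be ``complete''. In both approaches the algebraic identities are routine, and the genuine technical content lies in this book-keeping at the boundary.
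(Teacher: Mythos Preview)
The paper does not prove this statement. Theorem~1 is B\'{e}zout's 1779 result, quoted verbatim as a classical fact and then restated in modern form as Theorem~2 (with citation to Koll\'{a}r). The paper \emph{applies} B\'{e}zout's Theorem to bound the number of fixed points of the evolutionary recursion; it makes no attempt to prove the theorem itself. So there is no ``paper's own proof'' to compare your proposal against.

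That said, your two sketches are the standard modern routes to B\'{e}zout, and both are broadly correct in outline. A few remarks on the details. In the homotopy argument, your claim that ``the $F_i$ and $G_i$ share the same top-degree behaviour'' is not quite what you need: the leading form of $F_i$ is arbitrary, not a product of linear forms, so the leading forms of $H_i(t)$ genuinely vary with $t$. What you actually require is that for all but finitely many $t$ the leading forms of $H_1(t),\ldots,H_n(t)$ have no common projective zero; this holds for a generic choice of the $G_i$ (or after a generic linear change of coordinates), and that is where the genericity hypothesis enters. In the resultant argument, the phrase ``no extraneous factors'' hides real work: one must track that the successive resultants do not vanish identically and that the spurious factors introduced at each elimination step can be accounted for, which is exactly the content B\'{e}zout himself labored over. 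Your identification of the ``complete'' hypothesis with the absence of common zeros at infinity is the right modern translation.

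In short: your proposal is a reasonable outline of a proof of a theorem the paper never set out to prove.
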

Here, the `final equation' is a reference to the elimination method for solving systems of polynomials.  An immediate application is:
%%%%%%%
\begin{Corollary}\citet[48.3.  p. 24]{Bezout:1779E}
The number of intersection points of three surfaces expressed by algebraic equations is not greater than the product of the three exponents of the degrees of these equations.
\end{Corollary}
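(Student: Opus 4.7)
The plan is to derive this corollary as a direct application of the preceding theorem of B\'{e}zout. Three surfaces in three-dimensional space are each the zero locus of a single algebraic equation in the coordinates $(x,y,z)$; call them $f_j(x,y,z) = 0$ with degree $d_j$ for $j = 1, 2, 3$. Their common intersection is exactly the simultaneous solution set of the polynomial system $f_1 = f_2 = f_3 = 0$, so the task reduces to bounding the number of common zeros of this system.

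First I would invoke the preceding theorem with three complete equations in three unknowns of degrees $d_1, d_2, d_3$. The theorem guarantees that the `final equation' produced by successively eliminating two of the unknowns has degree at most the product $d_1 d_2 d_3$. Each isolated common zero of the system projects onto a root of this univariate final equation, and for a generic choice of elimination ordering (or after a linear change of coordinates that separates the projections of the common zeros) distinct intersection points project to distinct roots. The fundamental theorem of algebra then bounds the number of these roots, and hence the number of intersection points, by $d_1 d_2 d_3$.

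The subtle point I anticipate, and the chief obstacle, is the possibility that two or three of the surfaces share a common curve or even a common surface, in which case the intersection is positive-dimensional and the elimination procedure collapses, with the `final equation' becoming identically zero. The clean way to handle this is to interpret the corollary as a bound on \emph{isolated} intersection points only. Concretely, one perturbs the coefficients of the $f_j$ by a small generic amount; the perturbed system has only isolated intersection points, whose count is bounded by $d_1 d_2 d_3$ by the theorem. A semicontinuity argument then transfers the bound to the original system whenever its intersection set is itself zero-dimensional, which is the only case in which the conclusion is non-vacuous. This matches the use the author will subsequently make of the corollary, where the systems of interest are the fixed-point equations arising from the recursion \eqref{eq:Recursion}, for which isolation of equilibria is the standing hypothesis.
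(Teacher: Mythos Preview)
The paper does not actually supply a proof of this corollary: it is quoted verbatim as B\'{e}zout's own statement and introduced merely with the phrase ``An immediate application is,'' after which the paper moves directly to the modern restatement from Koll\'{a}r. So there is no paper-proof against which to compare in any substantive sense.

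That said, your proposal is entirely consonant with what the paper gestures at. You correctly take the corollary to be a specialization of the preceding theorem (three complete equations in three unknowns), and your mechanism---eliminate down to a univariate final equation of degree $d_1 d_2 d_3$, then bound the roots---is precisely the elimination-theoretic reading the paper alludes to when it explains that ``the `final equation' is a reference to the elimination method for solving systems of polynomials.'' Your handling of the positive-dimensional case via perturbation and semicontinuity, and your explicit restriction to isolated intersection points, anticipates exactly the dichotomy stated in the modern version (Theorem~2) that the paper actually uses downstream. In short: your argument is sound, matches the intended derivation, and is considerably more explicit than anything the paper itself provides for this particular statement.
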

The corollary is readily generalized to the intersection of arbitrary numbers of algebraic curves, to yield the version of B\'{e}zout's theorem utilized here, restated from \citet[p. 365-366]{Kollar:2008}:
%%%%%%%
\begin{Theorem}[B\'{e}zout's Theorem]\ \\
Let $f_1(\x), \ldots, f_n(\x)$ be $n$ polynomials in $n$ variables, and for each $i$ let $d_i$ be the degree of $f_i$.  Then either
\benu
\item the equation(s) $f_1(\x) = \cdots = f_n(\x) = 0$ have at most $d_1 d_2 \cdots d_n$ solutions; or
\item the $f_i$ vanish identically on an algebraic curve $C$, and so there is a continuous family of solutions.
\eenu
\end{Theorem}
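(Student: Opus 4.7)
The plan is to pass to projective space and establish the bound by an inductive elimination argument that tracks how degrees multiply. First I would homogenize: introduce $x_0$ and replace each $f_i(\x)$ by its degree-$d_i$ homogenization $F_i(x_0, x_1, \ldots, x_n)$, so each affine common zero $(x_1, \ldots, x_n)$ corresponds to a projective common zero $(1 : x_1 : \cdots : x_n)$ of the $F_i$ in $\Proj^n$; a bound on the number of projective common zeros then immediately bounds the affine ones. The base case $n=1$ is the fundamental theorem of algebra: $f_1(x) = 0$ has at most $d_1$ roots unless $f_1 \equiv 0$, in which case alternative (2) holds trivially with $C = \Complex$.

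For the inductive step I would invoke the following geometric fact about degrees in $\Proj^n$: if $V \subseteq \Proj^n$ has pure dimension $k$ and degree $e$, and $F$ is a homogeneous polynomial of degree $d$ that does not vanish on any irreducible component of $V$, then $V \cap V(F)$ has pure dimension $k-1$ and degree $de$. Starting from $V_0 = \Proj^n$ (dimension $n$, degree $1$) and intersecting successively with $V(F_1), V(F_2), \ldots, V(F_n)$, the resulting $V_n$ is either $0$-dimensional of degree $d_1 d_2 \cdots d_n$ (so consists of at most that many points), or else some step fails because an $F_i$ vanishes identically on a positive-dimensional component of $V_{i-1}$. In the first case every affine solution is one of those at most $d_1 \cdots d_n$ projective points, giving the claimed bound. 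In the second case that component is a projective variety of positive dimension contained in $V(F_1) \cap \cdots \cap V(F_n)$, and slicing it with enough generic hyperplanes cuts it down to an algebraic curve on which all $f_i$ vanish, establishing alternative (2).

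The main obstacle will be justifying the degree-multiplication step, which is essentially the content of Bezout's theorem itself. A self-contained elementary route I would take uses resultants: for homogeneous $F, G \in \Complex[x_0, \ldots, x_{n-1}][x_n]$ of total degrees $d, e$, the Sylvester resultant $\operatorname{Res}_{x_n}(F, G)$ is homogeneous of degree $de$ in $x_0, \ldots, x_{n-1}$ and vanishes at $(x_0^*, \ldots, x_{n-1}^*)$ exactly when $F$ and $G$ share an $x_n$-root there, so eliminating $x_n$ projects the common zero locus of $F_1, \ldots, F_n$ from $\Proj^n$ down to $\Proj^{n-1}$ while controlling degrees; iterating and carefully handling common factors (which correspond to the shared components producing the continuous families of alternative (2)) yields the product $d_1 \cdots d_n$. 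A cleaner but more sophisticated alternative would be to invoke the Chow ring $A^*(\Proj^n) \cong \Integers[h]/(h^{n+1})$, in which $[V(F_i)] = d_i h$ and the intersection class equals $d_1 \cdots d_n \cdot h^n$, i.e.\ $d_1 \cdots d_n$ times the class of a point, yielding the bound at once provided the intersection is known to be proper.
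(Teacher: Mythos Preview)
The paper does not prove this statement at all: B\'{e}zout's Theorem is quoted as a classical result, restated from \citet{Kollar:2008} with historical attribution to \citet{Bezout:1779}, and then \emph{applied} in the proofs of the paper's own Theorems~3 and~4. There is nothing to compare your argument against.

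As for the content of your sketch, it is a reasonable outline of the standard proof. You have correctly identified the essential difficulty yourself: the degree-multiplication step (``if $V$ has dimension $k$ and degree $e$, and $F$ does not vanish on any component, then $V \cap V(F)$ has dimension $k-1$ and degree $de$'') \emph{is} B\'{e}zout's theorem, or at least its inductive heart, so invoking it as a ``geometric fact'' is circular unless you carry out the resultant argument you mention afterward. The resultant approach does work, but the bookkeeping you wave at --- ``carefully handling common factors'' --- is where all the labor lies: one must show that when the resultant does not vanish identically, the fiber over each point of its zero set is finite, and that the total count with multiplicity is governed by the degree $de$ of the resultant. The Chow-ring route is cleaner but presupposes the construction of intersection classes and the identification $A^*(\Proj^n) \cong \Integers[h]/(h^{n+1})$, which in turn rests on B\'{e}zout-type statements. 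Either way your outline is accurate in shape; it is just not yet a proof, and the paper never intended one.
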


%%%%%%%
\subsection{In the Absence of Selection}

B\'{e}zout's Theorem is applied as follows by \citet[pp. 294-296]{Lyubich:1992}.  In the absence of selection, the system of fixed points of \eqref{eq:Recursion} can be written as the zeros of a system of polynomials:
\eb
\label{eq:secondOrder}
0 = f_i(\x) \eqdef \sum_{j,k=1}^n T_{ijk}  x_j x_k  - x_i \ \  (i = 1, \ldots, n-1)
\ee
and
\eb
\label{eq:firstOrder}
0 = f_n(\x) \eqdef \sum_{i=1}^{n} x_i - 1 .
\ee
This gives $n-1$ equations of degree 2, and one equation of degree 1 ( \eqref{eq:secondOrder}  and \eqref{eq:firstOrder}, respectively).  Therefore, $\fv(\x) = \0$ can have at most $2^{n-1} * 1 = 2^{n-1}$ isolated solutions, which are the fixed points of $\gv$.

\citet[Theorem 8.1.4 and Corollary 8.1.7, pp. 295-296]{Lyubich:1992} invokes the Poincar\'{e}-Hopf index theorem to reduce the upper bound to $2^{n-1} - 1$  when $\gv$ maps all points on the boundary in the direction of the interior of the simplex:  i.e., 
\begin{align*}
\mbox{for }\x & \in \boundary \Delta^{n-1} \suchthat \\
& (1-\epsilon) \ \x + \epsilon \ \gv(\x) \in \Delta^{n-1} - \boundary \Delta^{n-1} \mbox{\ for small \ }\epsilon > 0, \notag
\end{align*}
or, more simply,
\eb\label{eq:BoundaryCondition}
\mbox{if } x_i = 0, \mbox{then } g_i(\x)  > 0.
\ee
\citet[p. 295]{Lyubich:1992} provides inferences between several related conditions.  Let 
%\eb \label{eq:FixedPoints}
$\Fc(\gv) = \{ \xvh \suchthat \gv(\xvh) = \xvh, \ \xvh \in \Delta^{n-1} \}$
%\ee
represent the set of fixed points of $\gv$ on $\Simplex{n-1}$.  The conditions are:
\begin{align}
&\Fc(\gv) \intersect   \boundary \Delta^{n-1} = \emptyset;  \label{cond:BoundaryFree} \\
&\mbox{1 = Poincar\'{e}-Hopf index of } \gv \mbox{\ on } \boundary \Delta^{n-1}; \label{cond:PHopf}\\
&\matrx{T_{ijj}}_{i,j=1}^n \mbox{\ is an irreducible matrix}; \label{cond:Irreducible}\\
&T_{ijj}  > 0 \ \forall \ i, j \in \{1, \ldots, n \};  \label{cond:PositiveHomozygous}\\
&\gv(\x) \in \Simplex{n-1} -\boundary \Simplex{n-1}, \ \forall \x \in \Simplex{n-1}. \label{cond:Interior}
\end{align}
Lyubich \finalversion{proposes} the following chain of implications:
\begin{align*}
\eqref{cond:Interior} \iff  \eqref{cond:PositiveHomozygous} \implies \eqref{cond:Irreducible} \implies \eqref{cond:BoundaryFree} \implies \eqref{eq:BoundaryCondition}
\implies \eqref{cond:PHopf}
\end{align*}

Additional consideration, however, shows the correct chain of implications to be:
\begin{align*}
\eqref{cond:Interior} \iff  \eqref{cond:PositiveHomozygous} 
\left\{
\begin{array}{l}
\!\!\!\! \implies \eqref{cond:Irreducible} \implies \eqref{cond:BoundaryFree}\\ \\
\!\!\!\! \implies \eqref{eq:BoundaryCondition}
\left\{\begin{array}{l}
\!\!\!\! \implies \eqref{cond:BoundaryFree} \\
\!\!\!\! \implies \eqref{cond:PHopf}
\end{array}
\right.
\end{array}
\right.
\end{align*}
In particular, $\eqref{cond:BoundaryFree} \notimplies  \eqref{eq:BoundaryCondition} $.  Lyubich writes,  if ``the boundary of the simplex contains no fixed points [\eqref{cond:BoundaryFree}] then the vector field $Vx - x$ on the boundary is directed inside the simplex [\eqref{eq:BoundaryCondition}].''  Following is a (nonbiological) counterexample where \eqref{cond:BoundaryFree} and \eqref{cond:Irreducible} hold, but \eqref{eq:BoundaryCondition} does not (without claiming the theorems themselves to be invalid):  it is possible for the boundary to be without fixed points, and for the matrix $[T_{ijj}]$ to be irreducible (Lyubich uses the alternate term `indecomposable'), yet for $\gv$ to map boundary points toward the boundary and not the interior.  

Let $T$ simply rotate the index of each haplotype by -1:
\[
T_{ijk} = \left \{
\begin{array}{l l}
1 & \mbox{for \ \ }  i \!\!\!\!\! \pmod  n + 1 = j = k,  \\
1/2 & \mbox{for \ \ }  i \!\!\!\!\!\pmod  n + 1  = j \neq k,  \\
1/2 & \mbox{for \ \ }  i \!\!\!\!\!\pmod  n + 1   = k \neq j,  \\
0 & \mbox{otherwise}.
\end{array}
\right.
\]
An initial point $\x  =\ev_1 \eqdef (1, 0,  \cdots, 0)^\tp$ (\finalversion{$\scriptstyle \tp$}, the transpose) is taken by $\gv$ through a cycle of period $n$ through the vertices of the simplex, $\ev_1$, $\ev_n$, $\ev_{n-1}$, $\ldots$, $\ev_2$, $\ev_1$, $\ev_n, \ldots$.  

Every point on the boundary maps to a different point on the boundary.  
This can be seen because every boundary point must have some indices $i, i+1$ (modulo $n$) such that $x_{i}>0$ while $x_{i+1}=0$, but $T$ rotates the indices so that $g_{i}(\xv) = 0$ in the next generation, so no boundary point is fixed.   Moreover, for any point on the boundary with adjacent zeroes, i.e. $x_{i} = x_{i+1} = 0$, then $g_i(\x) = 0$, contrary to boundary condition \eqref{eq:BoundaryCondition}.  Here, boundary points of a sub-simplex map to other boundary points of that sub-simplex, so $(1-\epsilon) \x + \epsilon \, \gv(\x)$ remains on the boundary for all $0 \leq \epsilon \leq 1$.  Any fixed points must therefore be in the interior of the simplex, and by symmetry, this can only be the center, $\x = (1/n, \ldots, 1/n)^\tp$.  Thus, the exclusion of fixed points from the boundary does not imply that $\gv$ maps the boundary in the direction of the interior.  

What Lyubich is really after, however, is condition \eqref{cond:PHopf}.  So if one starts by assuming \eqref{eq:BoundaryCondition}, then \eqref{cond:PHopf} follows and the rest of Lyubich's  proof goes through: 
\benu
\item $\gv$ has a Poincar\'{e}-Hopf index of 1 on the boundary of the simplex  \eqref{cond:PHopf};
\item The index  must equal the sum of the indices of the fixed points in the interior of the simplex (see \citealt{Glass:1975:Topological} and \citealt{Glass:1975:Combinatorial} for an accessible explication of this approach);
\item The index of a non-degenerate fixed point must be $+1$ or $-1$;
\item Supposing that the maximum of $2^{n-1}$ fixed points is attained, then each must have multiplicity of 1, and is thus non-degenerate;
\item Having $2^{n-1}$ non-degenerate fixed points in the interior of the simplex, however, would produce a sum for their indices that is even, contrary to the index of $\gv$ under condition \eqref{eq:BoundaryCondition};
\item So there can be no more than $2^{n-1} - 1$ isolated fixed points of $\gv$ in the interior of $\Delta^{n-1}$ given \eqref{eq:BoundaryCondition}.  
\eenu

%%%%%%%%%%
\subsection{In the Presence of Selection}

The above result is derived when selection is absent, and the only force acting is transmission of some arbitrary form.  When selection is included along with transmission, the system of fixed points becomes:
\begin{align}
\label{eq:thirdOrder}
0 = f_i(\x) \eqdef  \sum_{j,k=1}^n &T_{ijk} w_{jk} x_j x_k  - x_i \sum_{j,k=1}^n w_{jk} x_j x_k 
\\ &( i = 1, \ldots, n-1), \notag
\intertext{and}
0 &= f_n(\x) \eqdef  \sum_{i=1}^{n} x_i - 1  \notag.
\end{align}
The terms $x_i \sum_{j,k=1}^n w_{jk} x_j x_k$ in \eqref{eq:thirdOrder} are all of degree 3, so  \citet[eq. (33)]{Lyubich:Kirzhner:and:Ryndin:2001} apply B\'{e}zout's Theorem to obtain an upper bound of $3^{n-1}$ on the number of isolated fixed points.

The vastly larger value for the upper bound when selection is present, $3^{n-1}$, versus $2^{n-1}$ when selection is absent, seems counterintuitive, and
one suspects that $3^{n-1}$ can be sharpened.

%%%%%%%%%%%%%%%%%%%%%%%%
\section{Results}
Closer examination of \eqref{eq:thirdOrder} finds that the equations are third order only due to the presence of a single common factor, the mean fitness $\wb \eqdef \sum_{j,k=1}^n w_{jk} x_j x_k$.  Such a structure suggests potentials from a change in the representation.  This is indeed the case, and the system can be made second order by introducing an additional variable $y$, and an additional equation that constrains $y$ to equal the mean fitness at equilibrium.  The result is as follows:

%%%%%%%%
%%%%%%%%
\begin{Theorem}[Generalized Feldman-Karlin Conjecture]
\label{Theorem:Main}
Consider the evolutionary system with viability selection, random mating, and arbitrary transmission of $n$ possible haplotypes, \finalversion{represented by the map $\gv$}:
\[
x_i' = g_i(x_1, \ldots, x_n) = \dspfrac{1}{\sum_{j,k=1}^n w_{jk} x_j x_k}  \ \sum_{j,k=1}^n T_{ijk} w_{jk} x_j x_k \ ,
\]
where $i = 1, \ldots, n$, $T_{ijk} \geq 0$, $\sum_{i=1}^n T_{ijk}=1$, $w_{ij} \geq 0$, and
\eb\label{eq:Condition}
x_i \geq 0, \ \sum_{i=1}^n x_i = 1. 
\ee
The number of isolated fixed points of $\gv$ is never greater than $2^n-1$.
\end{Theorem}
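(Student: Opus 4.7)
My plan mirrors the strategy hinted at in the paragraph preceding the theorem statement. The cubic terms in \eqref{eq:thirdOrder} appear only through the single common factor $\wb(\x)\eqdef\sum_{j,k=1}^{n}w_{jk}x_j x_k$, so I would introduce a new variable $y$ standing for the mean fitness at equilibrium and study the polynomial system in the $n+1$ unknowns $(x_1,\ldots,x_n,y)$,
\eb
F_i(\x,y) \eqdef \sum_{j,k=1}^{n} T_{ijk}\, w_{jk}\, x_j x_k - x_i\, y = 0, \quad i=1,\ldots,n,
\ee
\eb
F_{n+1}(\x,y) \eqdef \sum_{i=1}^{n} x_i - 1 = 0.
\ee
Summing $F_i$ over $i$, using $\sum_i T_{ijk}=1$, and combining with $F_{n+1}=0$ forces $y=\wb(\x)$; hence the isolated affine zeros of $(F_1,\ldots,F_{n+1})$ are in bijection, via the projection $(\xvh,\yh)\mapsto\xvh$, with the isolated fixed points of $\gv$. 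The system consists of $n$ polynomials of degree $2$ and one polynomial of degree $1$ in $n+1$ variables, so B\'{e}zout's Theorem yields at most $2^n\cdot 1=2^n$ isolated common zeros in $\Complex^{n+1}$.

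To sharpen the count to $2^n-1$, I would pass to $\Proj^{n+1}$ with homogeneous coordinates $(x_0{:}x_1{:}\cdots{:}x_n{:}y)$. Each $F_i$ is already homogeneous of degree $2$ in $(\x,y)$, while $F_{n+1}$ homogenizes to $\sum_i x_i - x_0$. Direct substitution shows that $P_\infty\eqdef(0{:}0{:}\cdots{:}0{:}1)$ is a projective zero of the homogenized system for every choice of $(T_{ijk},w_{jk})$: each $F_i$ vanishes termwise because $\x=\0$, and the homogenized simplex equation reduces to $0-0=0$. Since $P_\infty$ lies in the hyperplane at infinity $\{x_0=0\}$, it corresponds to no affine fixed point of $\gv$, and once it is shown to occupy exactly one of the $2^n$ B\'{e}zout slots, the upper bound $2^n-1$ on the isolated fixed points of $\gv$ follows.

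The principal obstacle is B\'{e}zout's second alternative --- a positive-dimensional family of projective zeros --- which would invalidate the naive subtract-one argument. I plan to handle this by computing the Jacobian of the homogenized system at $P_\infty$ in the affine chart $y=1$. A straightforward differentiation gives $\partial F_i/\partial x_\ell|_{P_\infty}=-\delta_{i\ell}$ for $1\leq \ell\leq n$ (the quadratic terms have vanishing first-order part at $\x=\0$) and $0$ in the $x_0$-column, while the last row, coming from the homogenized simplex equation, is $(-1,1,\ldots,1)$. The resulting $(n{+}1)\times(n{+}1)$ matrix is nonsingular, so $P_\infty$ is a simple isolated projective zero. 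A standard upper-semicontinuity argument in the parameters $(T_{ijk},w_{jk})$ then promotes the generic bound $2^n-1$ to the entire parameter space, completing the proof.
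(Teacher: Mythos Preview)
Your approach is essentially the paper's: introduce an auxiliary variable $y$ for the mean fitness to reduce the fixed-point equations to a second-order system, apply B\'{e}zout's Theorem for the bound $2^n$, then homogenize and exhibit the point at infinity $(0{:}\cdots{:}0{:}1)$ to reduce to $2^n-1$. The paper organizes the polynomial system slightly differently---retaining only $n-1$ of your $F_i$ and adjoining an explicit equation $f_{n+1}=\wb(\x)-y$ rather than recovering it from $\sum_i F_i$---and it does not carry out your Jacobian verification that $P_\infty$ is a simple zero, but the core argument is identical.
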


%%%%%%%
\begin{proof}
With the introduction of an additional variable $y$, the system of fixed points can be represented as $n+1$ equations in $n+1$ variables:
\begin{align}
%%%%
0 = &\ f_i(\x, y) \eqdef  \!\! \sum_{j,k=1}^n T_{ijk} w_{jk} x_j x_k - x_i y\label{eq:n-1}\\  &(i = 1, \ldots, n-1), \notag \\
%%%%
0 = &\ f_n(\x, y) \eqdef  \sum_{j=1}^{n} x_j - 1, \label{eq:n}\\
0 = &\ f_{n+1}(\x, y) \eqdef  \sum_{i,j=1}^n w_{ij} x_i x_j - y  \label{eq:n+1}. 
%%%%
\end{align}
Since all fixed points are isolated by hypothesis, application of B\'{e}zout's Theorem to \eqref{eq:n-1}, \eqref{eq:n}, and \eqref{eq:n+1} gives an upper bound of $2^{n-1} * 1 * 2 = 2^n$ on the number of fixed points.
 
This comes within 1 of the value conjectured by Feldman and Karlin for the upper bound on the number of isolated fixed points of evolutionary systems \eqref{eq:Recursion}.  Their upper bound could be demonstrated if one could show that a \finalversion{solution} satisfying \eqref{eq:n-1}, \eqref{eq:n}, and \eqref{eq:n+1} always exists outside the simplex.  Complex-valued solutions, and \finalversion{solutions} `at infinity', qualify and must be included in the count of \finalversion{solutions} in B\'{e}zout's Theorem.   

A solution `at infinity' is formally accounted for by mapping the system to projective space; this is accomplished by a homogeneous form for the intersection equations \citep[pp. 16-21]{Shafarevich:1994:AsAuthor}.  A new set of variables is defined: 
\begin{align*}
&z, \\
&\xb_i \eqdef x_i \, z, \text{ and} \\
&\yb \eqdef y  \,  z .
\end{align*}
Substituting in \eqref{eq:n-1}, \eqref{eq:n}, and \eqref{eq:n+1}, one obtains $n+1$ equations in $n+2$ unknowns:
\begin{align}
%%%%
0 = &\ f_i(\xv, y) \eqdef  \frac{1}{z^2} \sum_{j,k=1}^n T_{ijk} w_{jk} \xb_j \xb_k - \frac{1}{z^2} \xb_i \yb \label{eq:z1} \\ 
& (i = 1, \ldots, n-1),  \notag \\
%%%%
0 = &\ f_n (\xv, y) \eqdef   \frac{1}{z }\sum_{j=1}^{n} \xb_j - 1, \label{eq:z2}\\
0 =  &\ f_{n+1} (\xv, y) \eqdef   \frac{1}{z^2} \sum_{i,j=1}^n w_{ij} \xb_i \xb_j - \yb / z  \label{eq:z3}. 
%%%%
\end{align}
Multiplying both sides of \eqref{eq:z1} and \eqref{eq:z3} $z^2$, and \eqref{eq:z2} by $z$, one obtains  the homogeneous form:
%%%%
\begin{align}
0 = &\ \fb_i(\xvb, \yb, z) \eqdef  \sum_{j,k=1}^n T_{ijk} w_{jk} \xb_j \xb_k - \xb_i \yb \label{eq:Hom_n-1}\\
 &(i = 1, \ldots, n-1),  \notag\\
0 = &\ \fb_n (\xvb, \yb, z) \eqdef  \sum_{j=1}^{n} \xb_j - z, \label{eq:Hom_n}\\
0 =  &\ \fb_{n+1} (\xvb, \yb, z) \eqdef  \sum_{i,j=1}^n w_{ij} \xb_i \xb_j - \yb \ z  \label{eq:Hom_n+1}. 
\end{align}
%%%%
In the homogeneous representation, any non-trivial solution, $(\xvb, \yb, z) \neq (\0, 0, 0)$, gives as solutions all its scalar multiples $c \, (\xvb, \yb, z)$, $c \in \Complex$.  Hence all scalar multiples of a solution count as a single point (a point in projective space) when counting solutions.

The original system \eqref{eq:n-1} is obtained by setting $z=1$.  By setting $z = 0$, any nontrivial solution ($\xvb \neq \0$ or $\yb \neq 0$) corresponds to a solution `at infinity' for \eqref{eq:n-1}, \eqref{eq:n}, and \eqref{eq:n+1}, since it would give $\x = \xvb / 0$ and $y = \yb / 0$.  

Letting $z=0$ and $\xvb = \0$, we see that this is a solution:
\begin{align*}
\fb_i(\0, \yb, 0) = &\sum_{j,k=1}^n T_{ijk} w_{jk} 0*0 - 0 \yb = 0 \\ &(i = 1, \ldots, n-1), \\
\fb_n(\0, \yb, 0) =  &\sum_{j=1}^{n} 0 - 0  = 0,\\
\fb_{n+1}(\0, \yb, 0) = & \sum_{i,j=1}^n w_{ij} * 0 * 0 - \yb \ 0  = 0. 
\end{align*}
The variable $\yb$ is clearly unconstrained here, hence, the non-trivial solution is:
\[
(\xvb, \yb) = c \,  (\0, 1),  \ c \neq 0.
\]
This solution `at infinity' reduces by $1$ the number of possible finite-valued fixed points, leaving an upper bound of $2^n-1$ on the number of isolated fixed points in the simplex $\Delta^{n-1}$.
\end{proof}

An illustration of how the solution `at infinity' arises is given in \sectref{subsection:AtInfinity}.

%%%%%%%%%%%%%%%%
\section{A Mutation-Selection System Bearing $2^n-1$ Isolated Fixed Points}

Here, I show that $2^n-1$ is the smallest possible upper bound on the number of isolated fixed points for the general space of selection-transmission systems \eqref{eq:Recursion}, by constructing  an example that attain this bound.  It is already known that the bound is attained by examples of systems with selection and perfect transmission, where one fixed point is located in the interior of the simplex and \finalversion{in} the interior of each sub-simplex on the boundary, including the vertices \citep{Tallis:1966}.  So what remains to be determined is whether this bound can also be attained under some form of imperfect transmission.  

Such systems can be produced through the homotopy continuation method \citep{Kotsffeas:2001,Li:2003:Numerical}, where one creates a continuous family of systems between a known system, and an unknown system with desired properties.   In this case, the homotopy will go from a known system of selection and perfect transmission that has $2^n-1$ fixed points, to systems with imperfect transmission, by perturbing the transmission probabilities.  Under proper conditions, the homotopy will produce paths from the fixed points of the known system to the fixed points of unknown systems.

The homotopy continuation method {\it per se} originated independently in the work of \citet{Garcia:and:Zangwill:1977:Finding}, \citet{Drexler:1977}, and \citet{Chow:Mallet-Paret:and:Yorke:1978} \citep{Li:1997}.  An essential part of this method can be found earlier in the `method of small parameters' of Karlin and McGregor \citeyearpar{Karlin:and:McGregor:1972:Equilibria,Karlin:and:McGregor:1972:Polymorphisms,Karlin:and:McGregor:1972:Application}:
\begin{quote}
Principle I [if a system of transformations acting on a certain set (in finite dimensional space) has a ``stable'' fixed point, then a slight perturbation of the system maintains a stable fixed point nearby] can be interpreted as a perturbation or continuity theorem. Starting with a given genetic system for which the nature of the equilibria can be fully delineated (for example, the classical multi-allelic viability model), it is desired to investigate a perturbed version of the model. \citeyearpar[p. 86]{Karlin:and:McGregor:1972:Equilibria}
\end{quote}
\citet[Theorem 4.4, p. 231]{Karlin:and:McGregor:1972:Polymorphisms} use the implicit function theorem to establish the existence, uniqueness, and nearness of fixed points under perturbed models.  The stability properties of the fixed points also are preserved by the additional assumption that all fixed points are hyperbolic.  Hyperbolicity is not needed, however \finalversion{--- only non-degeneracy ---} to establish the existence, uniqueness, and nearness of isolated fixed points \citep[p. 24]{Akin:1983}.

For the choice of the system to perturb, I use \eqref{eq:Recursion} with perfect transmission, which is formally a one-locus, multiple-allele system:
\eb
\label{eq:RecursionPerfTrans}
x_i' = g_i(\x) \eqdef  \dspfrac{1}{\sum_{j,k=1}^n w_{jk} x_j x_k} \  x_i \sum_{j=1}^n w_{ij} x_j 
\ee
\citet{Tallis:1966} gives two examples of \eqref{eq:RecursionPerfTrans}  with $2^n-1$ fixed points, a \emph{fully overdominant} and a \emph{fully underdominant} system.  The fully overdominant system is due to \citet{Wright:1949:Adaptation} (discussed in \citealt[p. 260]{Li:1955}), and has $w_{ii} = 1 - s_i$, $w_{ij} = 1$ for all $i \neq j$, where $0 < s_i \leq 1$.  The fully underdominant system has $w_{ii} = 1$, $w_{ij} = 1 - s$ for all $i \neq j$, where $0 < s_i \leq 1$.

The homotopy is created by perturbing perfect transmission \finalversion{using} uniform mutation:
\eb
\label{eq:SelMutExample}
T_{ijk} = (1-\mu) \frac{1}{2}(\delta_{ij} + \delta_{ik}) + \mu / n,
\ee
where $\mu$ is the mutation rate.  With $\mu > 0$, \eqref{eq:SelMutExample} satisfies conditions \eqref{cond:Interior} and \eqref{cond:PositiveHomozygous}, yielding conditions \eqref{eq:BoundaryCondition}, \eqref{cond:BoundaryFree} and \eqref{cond:PHopf}.  Thus none of the boundary fixed points under \eqref{eq:RecursionPerfTrans} can remain on the boundary for $\mu > 0$.

To keep all $2^n - 1$ fixed points in the simplex when $\mu > 0$, the fixed points of \eqref{eq:RecursionPerfTrans} on the boundary need to move inside the simplex.  Since $\gv$ points into the simplex for $\mu > 0$, toward these fixed points, they need to have stable manifolds \citep[pp. 193--239]{Wiggins:1990} that enter the simplex.  A situation that produces this is where all $n$ corners of $\Simplex{n-1}$ are stable sink nodes, and all $k$-allele ($k \in \{2, \ldots, n-1 \}$) polymorphic boundary fixed points are unstable source nodes with respect to the sub-simplex $\Simplex{k-1}$ for which they are interior points.  \citet[p. 578]{Kingman:1961:MPPG} showed that under \eqref{eq:RecursionPerfTrans}, the vertex equilibria are stable and all polymorphic equilibria unstable if $\W$ is positive definite.  This is the case with the fully underdominant system (Also see \citealt{Christiansen:1990}.).

Let the fitnesses be $w_{ij} = 1 + s \, \delta_{ij}$, where $s > 0$.  In \eqref{eq:RecursionPerfTrans}, the fixed points comprise: the vertices of the simplex; the centers of each $h-1$ dimensional sub-simplex on the boundary; and the center of the simplex.  Fixed points will be of the form
\eb
\label{eq:PerfectTransmissionExample}
( \underbrace{1/h, \ldots, 1/h}_{h },  \underbrace{0, \ldots, 0}_{n-h} )
\ee
in any of $\choo{n}{h}$ distinct permutations of the order, where $h$ varies from $1$ to $n$.  Hence the number of fixed points is $\sum_{h=1}^n \choo{n}{h} = 2^n - 1$.

Letting the mutation rate $\mu$ become positive, this system provides a constructive proof for the following:

%%%%%%%%
\begin{Theorem}
The upper bound of $2^n - 1$ on the number of isolated fixed points for evolutionary systems \eqref{eq:Recursion} is sharp over the general space of selection and transmission coefficients.  In particular, for any $n$, a system with imperfect transmission can be constructed that attains the upper bound of $2^n-1$ fixed points in the simplex, using
$w_{ij} = 1 + s \, \delta_{ij}$, $s > 0$, and
\eb
T_{ijk} = (1-\mu) \frac{1}{2}(\delta_{ij} + \delta_{ik}) + \mu / n, \notag
\ee
for the ranges $0 < \mu < 1/2$ and $s \geq  \dspfrac{n}{1/\mu - 2},$
yielding:
\begin{align}
\label{eq:SelMutRecursion}
g_i(\x) = &\frac{1}{1 + s \sum_{j=1}^n  x_j^2} \\
&  \left[ (1-\mu) x_i + \frac{\mu}{n} + s(1-\mu) x_i^2  + \frac{s \mu}{n} \sum_{j=1}^n x_j^2
\right] \notag.
\end{align}
\end{Theorem}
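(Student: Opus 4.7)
The plan is to exploit the $S_n$-symmetry of the system \eqref{eq:SelMutRecursion}: both $w_{ij} = 1 + s\delta_{ij}$ and $T_{ijk}$ in \eqref{eq:SelMutExample} are invariant under permutations of haplotype indices, so for every subset $S \subseteq \{1, \ldots, n\}$ of size $h$ I would look for a fixed point of the symmetric form $x_i = a$ for $i \in S$ and $x_i = b$ for $i \notin S$. Writing $\sigma = \sum_j x_j^2 = h a^2 + (n-h) b^2$ and subtracting the fixed-point equation $g_i(\x) = x_i$ for some $i \in S$ from the same equation for some $j \in S^c$, the factor $(a - b)$ divides out, leaving either the central fixed point $a = b = 1/n$ or the relation
\[
\sigma = (1-\mu)(a + b) - \mu/s.
\]
Combined with the simplex constraint $h a + (n - h) b = 1$, eliminating $a$ produces the scalar quadratic
\[
n(n-h)\, b^2 - \bigl[n - \mu(2h - n)\bigr]\, b + \mu(s + h)/s = 0.
\]

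Next I would verify, for each $h \in \{1, \ldots, n - 1\}$, that the smaller root of this quadratic lies in $(0, 1/n)$ under the hypotheses $0 < \mu < 1/2$ and $s \geq n\mu/(1 - 2\mu)$. The constant term $\mu(s+h)/s$ is strictly positive, so the quadratic is strictly positive at $b = 0$; since the leading coefficient $n(n-h)$ is also positive, it suffices to show the quadratic is strictly negative at $b = 1/n$. Direct substitution and simplification reduce this negativity, after clearing denominators and dividing by $h > 0$, to
\[
\mu(2s + n) < s,
\]
which is precisely equivalent to $s > n\mu/(1 - 2\mu)$. Consequently the discriminant is automatically positive, the smaller root $b_h$ satisfies $0 < b_h < 1/n$, and $a_h = (1 - (n-h)\,b_h)/h > 1/n > b_h$ yields an interior fixed point of \eqref{eq:SelMutRecursion} with the ``high'' value on $S$. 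Because there are $\binom{n}{h}$ subsets of size $h$ and distinct $S$ produce distinct coordinate configurations, this stratum contributes exactly $\binom{n}{h}$ distinct fixed points.

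Summing over $h$ from $1$ to $n - 1$ and adjoining the symmetric fixed point $x_i \equiv 1/n$ yields
\[
\sum_{h=1}^{n-1}\binom{n}{h} + 1 = 2^n - 1
\]
fixed points in $\Simplex{n-1}$, all of them strictly interior whenever $\mu > 0$. Combined with the upper bound from Theorem~\ref{Theorem:Main}, this forces $2^n - 1$ to be attained, hence sharp. The main obstacle is the algebra: I need to verify that eliminating $a$ from the symmetric system really produces the quadratic above, and that its sign at $b = 1/n$ collapses cleanly to the single $h$-independent inequality $\mu(2s+n) < s$ so that \emph{every} stratum $h \in \{1,\ldots,n-1\}$ contributes its full $\binom{n}{h}$ fixed points under one uniform condition on $(\mu,s)$. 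A minor secondary check is distinctness across strata, but this is automatic since the $h$-stratum fixed point has exactly $h$ coordinates strictly above $1/n$, so both $h$ and $S$ are recoverable from the fixed point itself.
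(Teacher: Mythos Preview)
Your proposal is correct and follows essentially the same approach as the paper: impose the $S_n$-symmetric ansatz, reduce the fixed-point system to a scalar quadratic, verify the roots lie in the right range under the stated conditions on $(\mu,s)$, and count permutations. The paper substitutes the ansatz into a single fixed-point equation to obtain a cubic and then factors out the central-equilibrium root, whereas you subtract the two distinct fixed-point equations to factor out $(a-b)$ immediately and arrive at the quadratic in $b$; these are equivalent manipulations. Your verification step is slightly more elementary than the paper's: the paper computes the discriminant $\gamma(h)$ explicitly, minimizes it over $h$ via calculus, and then separately argues positivity of the root, while your sign-change argument at $b=0$ and $b=1/n$ gives existence and the interior bound $0<b<1/n<a$ in one stroke and collapses the $h$-dependence to the single inequality $\mu(2s+n)<s$ without any optimization.
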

%%%%
\begin{proof}
For small $\mu$, \eqref{eq:SelMutRecursion} is a perturbation of the system with perfect transmission, so its fixed points will be close to the isolated and non-degenerate points \eqref{eq:PerfectTransmissionExample}.  By symmetry we can expect them to be of the form:
\eb
\label{eq:permutations}
\xvh = P( \ \underbrace{ \frac{1-\proot}{h}, \ldots,  \frac{1-\proot}{h} }_{h },  \ \underbrace{  \frac{\proot}{n-h}, \ldots,  \frac{\proot}{n-h}}_{n-h}  \ )
\ee
for $h = 1, \ldots, n-1$, where $P$ is any permutation of the order of the entries.  For $h=n$ or $h=0$,  $\xh_i = 1/n$ for all $i$.

To verify this form, and solve for $\proot$, we substitute into \eqref{eq:SelMutRecursion}:
\eban
\lefteqn{\frac{1-\proot}{h} (1 + s V) } \\
& \displaystyle = (1-\mu) \frac{1-\proot}{h} + \frac{\mu}{n} + s(1-\mu) \left(\frac{1-\proot}{h}\right)^2 + \frac{s \mu}{n}V&
\eean
where 
\begin{align*}
V \eqdef \sum_{j=1}^n x_j^2 &= h \frac{ (1-\proot)^2 }{ h^2 } + (n-h) \frac{ \proot^2 }{ (n-h)^2} \\
&= \frac{ (1-\proot)^2 }{ h} +  \frac{ \proot^2 }{ (n-h)},
\end{align*}
which yields three roots:
\begin{align*}
\proot &= 1 - h/n, \\
\intertext{and}
\proot =   & \frac{1}{2} + \mu  \left( \frac{1}{2} -\frac{ h}{n} \right) \\
& \pm 
\sqrt{ \left[\frac{1}{2} + \mu  \left( \frac{1}{2} -\frac{ h}{n} \right) \right]^2 - \frac{\mu(h+s)(n-h)}{n s}}.
\end{align*}
The first root yields the central equilibrium $x_i = 1/n$.  Substitution shows that the other two roots yield the two equilibrium forms in \eqref{eq:permutations}, where $h$ and $n-h$ are interchanged.  The term inside the radical must be non-negative for real solutions, and so imposes constraints on $\mu$ and $s$:
\begin{align}
\label{eq:SquareRootTerm}
\gamma(h) &\eqdef   \left[\frac{1}{2} + \mu  \left( \frac{1}{2} -\frac{ h}{n} \right) \right]^2 - \frac{\mu(h+s)(n-h)}{n s} \\
&= \left ( \frac{1}{2} -\frac{ h}{n} \right) ^2 \left(\mu + \frac{n}{ s}\right) \mu + \frac{1}{4} - \frac{\mu n }{4 s} - \frac{\mu}{2} 
 \geq 0. \notag
\end{align}

To obtain the conditions for $\gamma(h) \geq 0$, first the value of $h$ is found that minimizes $\gamma(h)$:
\begin{align*}
\frac{d}{d h} \gamma(h) =  -  \frac{2}{n}   \left( \frac{1}{2} -\frac{ h}{n} \right)  \left(\mu + \frac{n}{ s}\right) \mu
\end{align*}
So $\frac{d}{d h} \gamma(h)  = 0$ at $h = n/2$.  The second derivative,
\[
\frac{d^2}{d h^2} \gamma(h) = \frac{2}{n^2}  \left(\mu + \frac{n}{ s}\right) \mu,
\]
is positive, hence $\gamma(n/2)$ is a minimum:
\begin{align*}
\gamma(n/2) & =  \left ( \frac{1}{2} - \frac{1}{2} \right) ^2 \left(\mu + \frac{n}{ s}\right) \mu + \frac{1}{4} - \frac{\mu n }{4 s} - \frac{\mu}{2} \\
& = \frac{1 }{4} - \frac{\mu n}{ 4s} - \frac{\mu }{ 2}
\end{align*}
Therefore,
\begin{align*}
\gamma(n/2) \geq 0 \ &\iff \  \frac{1}{4} \geq \mu \left( \frac{n}{ 4s} + \frac{1 }{ 2} \right)
\\
& \iff \  \mu < 1/2 \mbox{ and } s \geq \frac{\mu \, n}{1-2 \mu}.
\end{align*}
If $n$ is odd, then the requirement is $\gamma(\frac{n+1}{2}) = \gamma(\frac{n-1}{2})  \geq 0$, but since  $ \gamma(n/2)$ is the minimum, the above constraints on $\mu$ and $s$ are sufficient to keep $\gamma(\frac{n+1}{2})$ non-negative.
In \eqref{eq:SquareRootTerm}, since $\mu(h+s)(n-h)/(ns) > 0$, then $\sqrt{\gamma(h)} < \frac{1}{2} + \mu  \left( \frac{1}{2} -\frac{ h}{n} \right)$ so $\proot > 0$, as required.

The number of fixed points is the number of permutations of \eqref{eq:permutations}, $2^n - 1$, the same as for \eqref{eq:PerfectTransmissionExample} with perfect transmission.

Therefore, $\gv$ in \eqref{eq:SelMutRecursion} is an example of a system with imperfect transmission that attains the upper bound of $2^n-1$ isolated fixed points, all in the interior of the simplex.  Thus, for the general space of fitness and transmission coefficients, the upper bound in Theorem \ref{Theorem:Main} is sharp and cannot be lowered.
\end{proof}

\begin{figure}
\vspace*{.05in} 
\centerline{\includegraphics[width=\columnwidth]{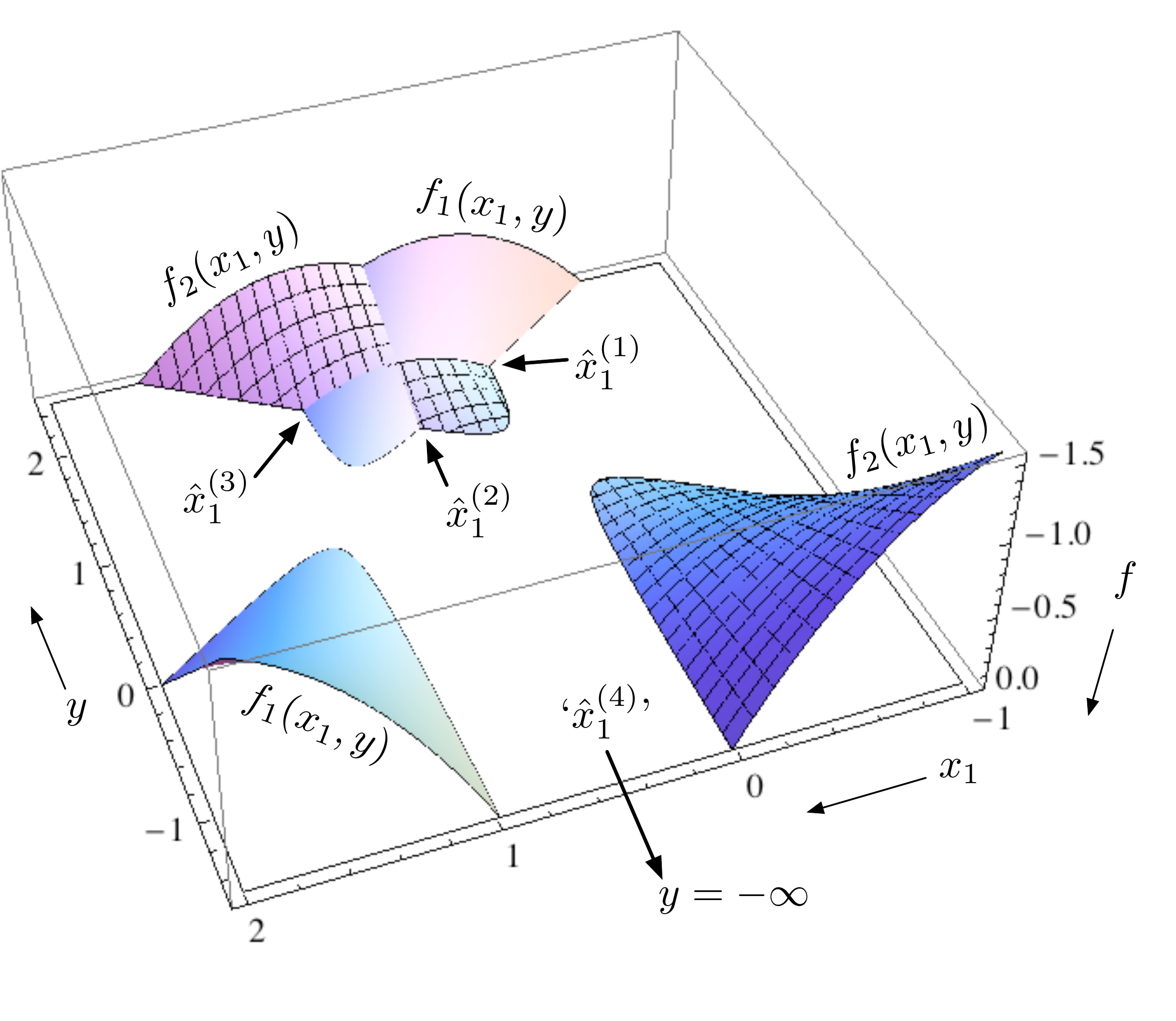}}
\caption{\label{fig:Example} A plot of the polynomials \eqref{eq:n-1} for the example \finalversion{\eqref{eq:SelMutRecursion}} with $n=2$, \finalversion{$s=1$}, $w_{11} = w_{22} = 2$, $w_{12} = w_{21} = 1$, \finalversion{and} $\mu = 0.05$.  The range is truncated at 0 to show the curves of solutions to $f_i(x_1, y)=0$.  Shown are the three points of intersection, where $f_1(\xh_1^{(k)}, y^{(k)}) = f_2(\xh_1^{(k)}, y^{(k)}) = 0$, for $k=1, 2, 3$.  At those three points, $\wb = y^{(k)}$.  Counting as the fourth intersection `at infinity', `$\xh_1^{(4)}$' indicates where the two curves approach parallel lines as $y \rightarrow - \infty$. }
\end{figure}

\subsection{The \finalversion{Solution} `At Infinity'}\label{subsection:AtInfinity}
This example system is also useful to understand the nature of the $2^n$-th solution `at infinity' in \finalversion{B\'{e}zout's} Theorem.  \figref{fig:Example} shows the polynomials \eqref{eq:n-1} for the two-allele case.  The two algebraic surfaces intersect each other and the $f_i = 0$ plane at $2^2 - 1 = 3$ points.  The absence of a fourth point is because the two algebraic surfaces approach parallel lines along the $f_i = 0$ plane as $y$ goes to $-\infty$.  These parallel lines are said to `meet at infinity' when the system is represented in projective space using the homogeneous form \eqref{eq:Hom_n+1}.

%%%%%%%%%%%%%%%%%%%%%%%%%%%
\section{Discussion}
Theorem \ref {Theorem:Main} proves the conjecture of Feldman and Karlin on the maximum number of isolated fixed points in a system of selection and recombination, and extends it to arbitrary transmission processes, of which recombination and mutation represent special cases.  This substantially sharpens the previous upper bound of $3^{n-1}$ \citep{Lyubich:Kirzhner:and:Ryndin:2001} on the number of isolated fixed points of \finalversion{an} evolutionary system with selection and arbitrary transmission. 

No attempt has been made to characterize the conditions on $T_{ijk}$ and $w_{jk}$ that produce only isolated and non-degenerate fixed points.  More on this issue can be found in \citet{Lyubich:Kirzhner:and:Ryndin:2001}.  One may mention, however, that such conditions are generic, in that for `almost all' sets \finalversion{of} $n$ algebraic hypersurfaces of degree $n$, the intersection consists of isolated non-degenerate fixed points (\citealt[p. 223]{Shafarevich:1994:AsAuthor}, \citealt{Garcia:and:Li:1980}); and sets of $T_{ijk}$ values that produce degenerate fixed points are nowhere dense \citep[Theorem 8.1.3]{Lyubich:1992} in the space of $T_{ijk}$ values.  Recombination-only systems without selection exhibit only continua of fixed points with no isolated fixed points, because arbitrary allele frequencies are all invariant, and only linkage disequilibria change in time.  Certain non-generic selection regimes also yield continua of fixed points.

The present paper does not touch at all upon the question of the stability of the equilibria.  Studies on the stability of equilibria include the papers cited previously, and, as a selected additional list,  \citet{Feldman:Franklin:and:Thomson:1974}, \citet{Karlin:1975}, \citet{Karlin:and:Liberman:1976}, \citet{Feldman:and:Liberman:1979}, \citet{Karlin:1979:Principles}, \citet{Karlin:1980},  \citet{Hastings:1981},  \citet{Hastings:1985}, \citet{Franklin:and:Feldman:2000}, and \citet{Puniyani:and:Feldman:2006}.  

It is notable that when selection is removed from the evolutionary system \eqref{eq:Recursion}, the upper bound on the number of isolated fixed points decreases by half, from $2^{n}-1$ to $2^{n-1}$ ($2^{n-1}-1$ when condition \eqref{eq:BoundaryCondition} holds).  In contrast, removing arbitrarily complicated imperfect transmission from \eqref{eq:Recursion} to constrain the systems to form \eqref{eq:RecursionPerfTrans}, where only selection acts, does not reduce the potential number of isolated fixed points at all.  

Removing selection means setting the $n \, (n+1)/2$ independent values of $w_{ij}$ to 1, whereas removing imperfect transmission means setting the $\finalversion{(n-1) \, n \, (n+1)/2}$ independent values of $T_{ijk}$ to $(\delta_{ij} + \delta_{ik})/2$.  Hence, paradoxically, when the $\Order(n^2)$ fitness coefficients are allowed to vary, it doubles the potential number of isolated fixed points, whereas when the $\Order(n^3)$ transmission probabilities are allowed to vary, the potential number of fixed points remains unchanged.  This reveals a structural difference between the roles of selection and transmission in the dynamics.

A homotopy of a continuous family of evolutionary systems can be defined between any two evolutionary systems \eqref{eq:Recursion}, in particular between ones with and without selection through the parameterizations:
\[w_{ij}(\alpha) =  (1-\alpha) + \alpha \ w_{ij}, \mbox{\ or \ } w_{ij}(\alpha) =  w_{ij}^ \alpha.
\]
So any system 
\[
g_i^{(\alpha)}(\x) \eqdef   \dspfrac{1}{\sum_{j,k=1}^n w_{jk}(\alpha) \ x_j \ x_k} \ \sum_{j,k=1}^n T_{ijk} \ w_{jk}(\alpha) \ x_j \ x_k
\]
with greater than $2^{n-1}$ fixed points will exhibit bifurcations of fixed points for some value(s) of $\alpha \in [0, 1]$.  

One can contemplate many other possible applications of the homotopy method to investigate the equilibrium behavior of these evolutionary systems.

%%%%%%%%%%%%%%%%%%%%%%%%%%%%%%%%%%%%
\section{Acknowledgments}
The idea of creating a new variable to reduce the third order equations to second order came to me after seeing the use of substitutions in \citet{MathPages:Bezout}.

\end{document}